\newcommand{\restr}[2]{{#1}|_{#2}}
\newcommand{\agents}{\mathcal A}
\newcommand{\lang}{\mathcal L}
\newcommand{\lblack}{\lang^{\cdia}}
\newcommand{\ddia}[1]{\langle{#1}\rangle}
\newcommand{\dbox}[1]{[#1]}
\newcommand{\adia}{\lozenge}
\newcommand{\abox}{\square}
\newcommand{\cdia}{\blacklozenge}
\newcommand{\cbox}{\blacksquare}
\newcommand{\acro}[1]{\textsc{#1}}
\newcommand{\iterate}[1]{{\cal I}(#1)}
\newcommand{\shrink}[2]{\rho_{#1}(#2)}
\newcommand{\comp}[2]{C(#1,#2)}
\newcommand{\dlangm}{{\mathcal L}_{\acro{DDL}}}
\newcommand{\pis}[1]{{\mathbf e}_{#1}}
\newcommand{\carriers}[1]{Q_{#1}}
\newcommand{\kmod}[2]{{\cal K}_{(#1,#2)}}
\newcommand{\rels}[1]{{\sf R_{#1}}}
\newcommand{\update}[3]{{\mathcal U}_{#1}(#2,#3)}
\newcommand{\cons}[1]{{\textara{\ea}}(#1)}
\newcommand{\af}{{\sf F}}
\newcommand{\basis}{basis }
\newcommand{\views}{\mathcal B}
\newcommand{\viewsv}{\left(V_a\right)_{(a \in \agents)}}
\newcommand{\carrier}{Q_\views}
\newcommand{\sem}{\varepsilon}
\newcommand{\depth}[1]{|{#1}|^\adia}
\newcommand{\bisim}{\underline{\leftrightarrow}}
\author{Truls Pedersen\inst{1} \and Sjur Dyrkolbotn\inst{2}}
\institute{Inst. for Information Science and Media Studies, University of Bergen, Norway \and Durham Law School, Durham University, United Kingdom}
\title{Computing consensus: A logic for reasoning about deliberative processes based on argumentation}
\begin{document}

\maketitle

\begin{abstract}
We consider multi-agent argumentation, where each agent's view of the arguments is encoded as an argumentation framework (AF). Then we study deliberative processes than can occur on this basis. We think of a deliberative process as taking the shape of a stepwise aggregation of a single joint AF, and we are interested in reasoning about the space of possible outcomes. The only restriction we place on deliberative processes is that they should satisfy \emph{faithfulness}, a postulate amounting to requiring that whenever deliberation leads to a new relationship being introduced between two arguments, this relationship is endorsed by at least one participating agent. We use modal logic to reason about the resulting deliberative structures, and we provide some technical results on model checking. We also give an example and suggest some directions for future work.
\end{abstract}

\section{Introduction}\label{sec:intro}

We study agency and argumentation, and propose a framework for modeling and reasoning about \emph{deliberation}. In short, we assume that some agents are given, with their own individual view of the argumentation scenario at hand, and that deliberation is a process by which we attempt to reconcile all these individual views to aggregate a joint, common view on the situation, which we may then analyze further using established techniques from argumentation theory. We develop our framework based on the formal theory of argumentation introduced by Dung \cite{Dung}, which has attracted much interested from the AI community, see \cite{Argbook} for a recent volume devoted to this theory. In keeping with recent trends, we also take advantage of \emph{logical} tools, relying both on a truth-functional three-valued view of argumentation \cite{Arieli,Sjur-SYNT}, and on the use of modal logic \cite{grossi1,grossi2}.

Unlike much previous work in argumentation theory, including work done on multi-agent argumentative interaction (see, for instance, \cite[Chapter 13]{Argbook}), we do not worry about attempting to design procedures for ``good'' deliberation, but focus instead on a logical analysis of the space of possible outcomes, assuming only a minimal restriction on the nature of the  deliberative processes that we consider permissible. The restriction encodes the intuition that the common view aggregated by deliberation should be built in such a way that we only make use of information already present in the view of some agent. This, indeed, seems like a safe requirement, and appears to be one that no reasonable group of agents would ever want to deviate from. 

We remark that our concept of deliberation is also somewhat unusual compared to earlier work in that we focus on processes that aim to reconcile \emph{representations} of the argumentative situation, rather than processes where agents interact based on their different \emph{judgments} about a given structure. While we will not argue extensively for the appropriateness of this shift of attention, we mention that it is prima facie reasonable: if two agents judge a situation differently it seems safe to assume that they must have a \emph{reason} for doing so. Moreover, while this reason could sometimes be due to disagreement about purely logical principles or atomic (i.e., unanalyzable) differences in preference, it seems clear that in real life, disagreement arises just as often, perhaps more often, as a result of a difference in \emph{interpretation}, i.e., from the fact that the agents have different mental representations of the situation at hand.

In fact, in this paper we will refrain from committing to a particular view on judgments, and we will set up our logical framework in such a way that it allows us to use any semantics from formal argumentation as the source of basic judgments about arguments given a framework. Our own focus is solely on the stepwise, iterative development of a common framework, and on the logical analysis of the different ways in which such a process may unfold, by way of a logical treatment of the modalities that arise from quantifying over the space of possible deliberative futures.

The structure of the paper is as follows. In Section \ref{sec:abt} we give a background on abstract argumentation, concluding by an example that motivates our logic and a definition of \emph{normality}, which allows us to parameterize our constructions by any normal argumentation semantics (all semantics of which we are aware are normal). Then in Section \ref{sec:ddl} we introduce \emph{deliberative dynamic logic} (\acro{ddl}), a concrete suggestion for a logical framework allowing us to reason about deliberative processes using modal logic. Then in Section \ref{sec:mcheck} we show that while our models are generally infinite, model checking is still feasible since we may ``shrink'' them, by restricting attention only to the relevant parts of the structure. Then, in Section \ref{sec:concfut}, we conclude and discuss directions for future work.

\section{Background on abstract argumentation}\label{sec:abt}

Following Dung \cite{Dung} we represent argumentation structures by directed graphs $\af = (S,E)$, such that $S$ is a set of arguments and $E \subseteq S \times S$ encode the attacks between them, i.e., such that if $(x,y) \in E$ then the argument $x$ attacks $y$. Traditionally, most work in formal argumentation theory has focused on defining and investigating notions of successful sets of arguments, in a setting where the argumentation framework is given and remains fixed. Such notions are typically formalized by an \emph{argumentation semantics}, an operator $\sem$ which returns, for any AF $\af$, the set of sets of arguments from $\af = (S,E)$ that are regarded as successful combinations, i.e., such that $\sem(\af) \subseteq 2^S$. Many proposals exists in the literature, we point to \cite{BaroniEval} for a survey and formal comparison of different semantics. While some semantics, such as the grounded and ideal semantics, return a unique set of arguments, the ``winners'' of the argumentation scenario encoded by $\af$, most semantics return more than one possible collection of arguments that \emph{would} be successful if they were held together. For instance, the admissible semantics, upon which many of the other well-known semantics are built, returns, for each AF $\af$, the following sets of arguments:

$$
a(\af) = \{A \subseteq S \mid E^-(A) \subseteq E^+(A) \subseteq S \setminus A\}
$$

Where $E^-(X)$ are the nodes which has an arrow into $X$, and $E^+(X)$ are the nodes which the nodes in $X$ has an arrow into. That is, the admissible sets are those that can defend themselves against attacks (first inclusion), and do not involve any internal conflicts (second inclusion). A strengthening that is widely considered more appropriate (yet incurs some computational costs) is the \emph{preferred} semantics $\sf p$, which is defined by taking only those admissible sets that are set-theoretically maximal, i.e., such that they are not contained in any other admissible set. In general, an AF admits many preferred sets, and even more admissible ones. Indeed, notice that the empty set is always admissible by the default (the inappropriateness of which provides partial justification for using preferred semantics instead). As a simple example, consider $\af$ below.
$$\begin{array}{ll}
\af: \xymatrix{p \ar@/_/[r] & q \ar@/_/[l] } &\hspace{4em} a(\af) = \{\emptyset,\{p\},\{q\}\}, \ \sf p(\af) = \{\{p\},\{q\}\}\}
\end{array}
$$

Indeed, it seems hard to say which one of $p$ and $q$ should be regarded as successful in such a scenario. In the absence of any additional information, it seems safest to concede that choosing either one will be a viable option. Alternatively, one may take the view that due to the undetermined nature of the scenario, it should not be permitted to regard either argument as truly successful. This, indeed, is the view taken by unique status semantics, such as the grounded and ideal semantics. However, while such a restrictive view might be appropriate in some circumstance, it seems unsatisfactory for a general theory of argumentation. Surely, in most real-world argumentation situations, it is not tenable for an arbitrator to refrain from making a judgment whenever doing so would involve some degree of discretion on his part.

Since argumentation semantics typically only restrict the choice of successful arguments, without determining it completely, a modal notion of \emph{acceptance} arises, usually referred to as \emph{skeptical} acceptance in argumentation parlance, whereby an argument is said to be skeptically accepted by $\af$ under $\sem$ if $\forall S \in \sem(\af): p\in S$. The dual notion is called \emph{credulous} acceptance, and obtains just in case $\exists S \in \sem(\af): p\in S$. Moreover, since the choice among elements of $\sem (\af)$ can itself be a contentious issue, and is not one which can be satisfactorily resolved by single-agent argumentation theory, there has been research devoted to giving an account of multi-agent interaction concerning the choice among members of $\sem(\af)$, see \cite{manipulation,lying}. While this is interesting, it seems that another aspect of real-world argumentation has an even stronger multi-agent flavor, namely the process by which one arrives at a common AF in the first place. Certainly, two agents, $a$ and $b$, might disagree about whether to choose $p$ or $q$ in $\af$ considered above, but as it stands, such a choice appears arbitrary and, most likely, the two agents would also be willing to admit as much. Arguably, then, the disagreement itself is only superficial. The agents disagree, but they provide no \emph{reason} for their different preferences, and do not provide any content or structure to substantiate them. This leaves an arbitrator in much the same position as he was in before: he might note the different opinions raised, but he has no basis upon which to inquire into their merits, and so his choice must, eventually, still be an exercise in discretion.

In practice, however, it would have to be expected that if the agents $a$ and $b$ were really committed to their stance, they would not simply accept that $\af$ correctly encodes the situation and that the choice is in fact arbitrary. Rather, they would produce \emph{arguments} to back up their position. It might be, for instance, that agent $a$, who favors $p$, claims that $q$ is inconsistent for some reason, while agent $b$, who favors $q$, makes the same accusation against the argument $p$. Then, however, we are no longer justified in seeing this as disagreement about which choice to make from $\sem(\af)$. Rather, the disagreement concerns the nature of the argumentation structure itself. The two agents, in particular, put forth different \emph{views} on the situation. For instance, in our toy example, we would have to consider the following two AFs, where $V_a, V_b$ encode the views of $a$ and $b$ respectively.

\begin{equation}\label{ex}
\begin{array}{ll}
V_a: \hspace{2em} \xymatrix{p \ar@(lu,ld) \ar@/_/[r] & q \ar@/_/[l] } & \hspace{4em} V_b: \xymatrix{p \ar@/_/[r] & q \ar@(ru,rd) \ar@/_/[l] }
\end{array}
\end{equation}

Then the question arises: what are we to make of this? 

In the following, we address this question, and we approach it from the conceptual starting point that evaluating (higher-order) differences of opinion such as that expressed by $V_a,V_b$ takes place iteratively, through a process of \emph{deliberation}, leading, in a step-by-step fashion, to an aggregated \emph{common} $\af$. Such a process might be instantiated in various ways: it could the agents debating the matter among themselves and reaching some joint decision, or it could be an arbitrator who considers the different views and reasons about them by emulating such a process. Either way, we are not interested in attempting to provide any guidance towards the ``correct'' outcome, which is hardly possible in general. Rather, we are interested in investigating the modalities that arise when we consider the space of all possible outcomes (where possible will be defined in due course). Moreover, we are interested in investigating structural questions, asking, for instance, about the importance of the order in which arguments are considered, and the consequences of limiting attention to only a subset of arguments.

We use a dynamic modal logic to facilitate this investigation, and in the next section we define the basic framework and show that model checking is decidable even on infinite AF's, as long as the agent's views remains finitely branching, i.e., as long as no argument is attacked by infinitely many other arguments. We will parameterize our logic by an argumentation semantics, so that it can be applied to any such semantics which satisfies a normality condition. In particular, let $C(\af) = \{C^{\af}_1,\ldots,C^{\af}_i,\ldots\}$ denote the (possibly infinite) set of maximal connected components from $\af$ (the set of all maximal subsets of $S$ such that any two arguments in the same set are connected by a sequence of attacks). Then we say that a semantics $\sem$ is \emph{normal} if we have, for any $\af = (S,E)$

\begin{equation}\label{eq:normal}
A \in \sem(\af) \Leftrightarrow A = \bigcup_{i}A_i \text{ for some } A_1,\ldots,A_i,\ldots \text{ s.t. } A_i \in \sem(C^{\af}_i) \text { for all } i
\end{equation}
That is, a semantics is normal if the status of an argument depends only on those arguments to which it has some (indirect) relationship through a sequence of attacks. We remark that all argumentation semantics of which we are aware satisfies this requirement, hence we feel justified in dubbing it normality.

\section{Deliberative dynamic logic}\label{sec:ddl}

We assume given a finite non-empty set $\agents$ of agents and a countably infinite set $\Pi$ of arguments.\footnote{Possibly ``statements'' or ``positions'', depending on the context of application.} The basic building block of dynamic deliberative logic is provided in the following definition.

\begin{definition}\label{def:basis} A \basis for deliberation is an $\agents$-indexed collection of digraphs $\views = (V_a)_{(a \in \agents)}$, such that for each $a \in \agents$, $V_a \subseteq \Pi \times \Pi$.
\end{definition}

Given a basis which encodes agents' view of the arguments, we are interested in the possible ways in which agents can deliberate to reach \emph{agreement} on how arguments are related. That is, we are interested in the set of all AFs that can plausibly be seen as resulting from a \emph{consensus} regarding the status of the arguments in $\Pi$. What restrictions is it reasonable to place on a consensus? It seems that while many restrictions might arise from pragmatic considerations, and be implemented by specific protocols for ``good'' deliberation in specific contexts, there are few restrictions that can be regarded as completely general. For instance, while there is often good reason to think that the position held by the majority will be part of a consensus, it is hardly possible to stipulate an axiomatic restriction on the notion of consensus amounting to the principle of majority rule. Indeed, sometimes deliberation takes place and leads to a single dissenting voice convincing all the others, and often, these deliberative processes are far more interesting than those that transpire along more conventional lines. However, it seems reasonable to assume that whenever \emph{all} agents agree on how an argument $p$ is related to an argument $q$, then this relationship is part of any consensus. This, indeed, is the only restriction we will place on the notion of a consensus; that when the AF $\af$ is a consensus for $\basis$, it must satisfy the following \emph{faithfulness} requirement.
\begin{itemize}
\item \emph{For all $p,q \in \Pi$, if there is no disagreement about $p$'s relationship to $q$ (attack/not attack), then this relationship is part of $\af$}
\end{itemize}

This leads to the following definition of the set $\cons \views$, which we will call the set of \emph{complete assents} for $\views$, collecting all AFs that are faithful to $\views = \viewsv$.

\begin{equation}\label{def:consensus}
\cons \views = \left\{\af \subseteq \Pi \times \Pi ~\left|~ \bigcap_{a \in \agents}V_a \subseteq \af \bigcup_{a \in \agents}V_a\right.\right\}
\end{equation}

An element of $\cons \views$ represents a possible consensus among agents in $\agents$, but it is an \emph{idealization} of the notion of assent, since it disregards the fact that in practice, assent tends to be \emph{partial}, since it results from a dynamic process, emerging through \emph{deliberation}. Indeed, as long as the number of arguments is not bounded we can \emph{never} hope to arrive at complete assent via deliberation. We can, however, initiate a process by which we reach agreement on more and more arguments, in the hope that this will approximate some complete assent, or maybe even be \emph{robust}, in the sense that there is \emph{no} deliberative future where the results of current partial agreement end up being undermined. Complete assent, however, arises only in the limit.

When and how deliberation might successfully lead to an approximation of complete assent is a question well suited to investigation with the help of dynamic logic. The dynamic element will be encoded using a notion of a deliberative event -- centered on an argument -- such that the set of ways in which to relate this argument to arguments previously considered gives rise to a space of possible deliberative time-lines, each encoding the continued stepwise construction of a joint point of view. This, in turn, will be encoded as a monotonically growing AF $\af = (S,E)$ where $S \subseteq \Pi, E \subseteq S \times S$ and such that faithfulness is observed by all deliberative events. That is, an event consists of adding to $\af$ the agents' combined view of $p$ with respect to the set $S \cup \{p\}$. This leads to the following collection of possible events, given a basis $\views$, a partial consensus\footnote{These ``partial consensuses'' are sometimes referred to as ``contexts'' when they are used to describe graphs inductively, as we will do later.} $\af = (S,E)$ and an argument $p \in \Pi$:

\begin{equation}\label{eq:update}
\update \views \af p = \left\{X ~\left|~ \bigcap_{a \in \agents}\restr {V_a} {S \cup \{p\}} \subseteq X \subseteq \bigcup_{a \in \agents}\restr {V_a}{S \cup \{p\}}\right.\right\}
\end{equation}

To provide a semantics for a logical approach to deliberation based on such events, we will use Kripke models.

\begin{definition}[Deliberative Kripke model]\label{def:main} Given an argumentation semantics $\sem$ and a set of views $\views$, the deliberative Kripke models induced by $\views$ and $\sem$ is the triple $\kmod \views \sem = (\carriers \views, \rels \views,\pis \sem)$ such that
\begin{itemize}
\item $\carriers \views$, the set of points, is the set of all pairs of the form $q = (q_S,q_E)$ where $q_S \subseteq \Pi$ and $$\bigcap_{a \in \agents}\restr {V_a} {q_S} \subseteq q_E \subseteq \bigcup_{a \in \agents}\restr {V_a} {q_S}$$
The basis $\views$ together with our definition of an event, given in Equation \ref{eq:update}, induces the following function, mapping states to their possible deliberative successors, defined for all $p \in \Pi, q \in \carriers \views$ as follows
$$succ(p, q) \quad := \quad \{~(q_S \cup \{p\}, q_E \cup X) ~|~ X \in \update \views q p~\}$$ 
We also define a lifting, for all states $q \in \carriers \views$:
$$succ(q) \quad := \quad \{~q' \mid \exists p \in \Pi: q' \in succ(q,p)\}$$
\item $\rels \views: \Pi \cup \{\exists\} \to 2^{\carriers \views \times \carriers \views}$ is a map from symbols to relations on $\carriers \views$ such that 
\begin{itemize} \item $\rels \views(p) = \{(q,q') \mid q' \in succ(p,q)\}$ for all $p \in \Pi$ and
\item $\rels \views(\exists) = \{(q,q') \mid q' \in succ(q)\}$,
\end{itemize} \vspace{1em}
\item $\pis \sem: \carriers \views \to 2^{(3^\Pi)}$ maps states to labellings such that for all $q \in \carriers \views$ we have $\pis \sem(q) = (\pi_1,\pi_0,\pi_{\frac{1}{2}})$ with $$\pis \sem(q) = \{\pi \mid \pi_1 \in \sem(q), \pi_0 = \{p \in q_S \mid \exists q \in \pi_1: (q,p) \in q_E\}\}$$
\end{itemize}
\end{definition}

Notice that in the last point, we essentially map $q$ to the sets of extensions prescribed by $\sem$ when $q$ is viewed as an AF. We encode this extension as a three-valued labeling, however, following \cite{caminada06}. Notice that the default status, attributed to all arguments not in $q_S$, is $\frac{1}{2}$. The logical language we will use consists in two levels. For the lower level, used to talk about static argumentation, we follow \cite{Arieli,Sjur-SYNT} in using {\L}ukasiewicz three-valued logic. Then, for the next level, we use a dynamic modal language which allows us to express consequences of updating with a given argument, and also provides us with existential quantification over arguments, allowing us to express claims like ``there is an update such that $\phi$''. This leads to the language $\dlangm$ defined by the following \acro{bnf}'s

$$ \phi \quad ::= \cdia \alpha ~|~ \neg \phi ~|~ \phi \wedge \phi ~|~ \ddia p \phi ~|~ \adia \phi$$ 
where $p \in \Pi$ and $\alpha \in \lblack$ where $\lblack$ is defined by the following grammar:
$$
\alpha ::= p \ | \ \neg \phi \ | \ \phi \to \phi $$
for $p \in \Pi$.

We also use standard abbreviations such that $\abox \phi = \neg \adia \neg \phi$, $\dbox p \phi = \neg \ddia p \neg \phi$ and $\cbox \alpha = \neg \cdia \neg \alpha$. We also consider that standard boolean connectives abbreviated as usual for connectives not occurring inside a $\cdia$-connective and abbreviations for connectives of {\L}ukasiewicz logic in the scope of $\cdia$-connectives.

Next we define truth of formulas on deliberative Kripke models. We begin by giving the valuation of complex formulas from $\lblack$, which is simply three-valued {\L}ukasiewicz logic.

\begin{definition}[$\alpha$-satisfaction] For any three-partitioning $\pi = (\pi_1,\pi_0,\pi_{\frac{1}{2}})$ of $\Pi$, we define 
\begin{align*}
\overline \pi(p) &= x \text{ s.t } p \in \pi_x \\
\overline\pi(\neg\alpha) &= 1 - \overline\pi(\alpha)\\
\overline\pi(\alpha_1 \to \alpha_2) &= \min \{1, 1 - (\overline\pi(\alpha_1) - \overline\pi(\alpha_2))\} 
\end{align*}
\end{definition}

Now we can give a semantic interpretation of the full language as follows.

\begin{definition}[$\dlangm$-satisfaction]\label{def:ddlm}
Given an argumentation semantics $\sem$ and a basis $\views$, truth on $\kmod \views \sem$ is defined inductively as follows, in all points $q \in \carriers \views$.
\begin{align*}
\kmod \views \sem, q \vDash \cdia \alpha & \iff \quad \text{there is } \pi \in \pis \sem(q) \text{ s.t. } \overline\pi(\phi) = 1 \\
\kmod \views \sem, q \vDash \neg \phi \quad & \iff \quad \text{not } \kmod \views \sem, q \vDash \phi \\
\kmod \views \sem, q \vDash \phi \wedge \psi & \iff \quad  \text{both } \kmod \views \sem, q \vDash \phi \text{ and } \kmod \views \sem, q \vDash \psi \\
\kmod \views \sem, q \vDash \ddia \pi p & \iff \quad \exists (q,q') \in \rels \views(p): \kmod \views \sem,q' \vDash \phi \\
\kmod \views \sem, q \vDash \adia \phi & \iff \quad \exists (q,q') \in \rels \views(\exists): \kmod \views \sem,q' \vDash \phi \\
\end{align*}
\end{definition}

%We will often formulate propositions about our models with respect to some point, i.e., by making reference to pointed models. When the models are pointed, we will make use of a useful abstraction which will facilitate more %succinct proofs. Often, but not always, the models will be rooted in the empty graph. The point $(\emptyset, \emptyset)$ in the above definition of a model. 

To illustrate the definition, we return to the example depicted in (\ref{ex}). In Figure \ref{tbl:dynamism-3}, we depict this basis together with a fragment of the corresponding Kripke model, in particular the fragment arising from the $p$-successors of $(\emptyset,\emptyset)$.

%
%\begin{figure}[ht]
%  \centering
%  \def\svgwidth{\textwidth}
%  \begin{tiny}
%    \input{img/example.pdf_tex}
%  \end{tiny}
%  \caption{Almost complete tree.}
%  \label{tbl:dynamism-3}
%\end{figure}

\begin{figure}[ht]
\begin{minipage}[t]{10em}
$$ \small
\views = \left\{\begin{array}{l} V_a: \hspace{1.5em} \xymatrix{p \ar@(lu,ld) \ar@/_/[r] & q \ar@/_/[l] } \vspace{2em}  \\ V_b: \xymatrix{p \ar@/_/[r] & q \ar@(ru,rd) \ar@/_/[l] }
\end{array}\right\}$$
\end{minipage}
\begin{minipage}[t]{28em} \hspace{0.2em}
  \centering
  \def\svgwidth{\textwidth}
  \begin{tiny}
    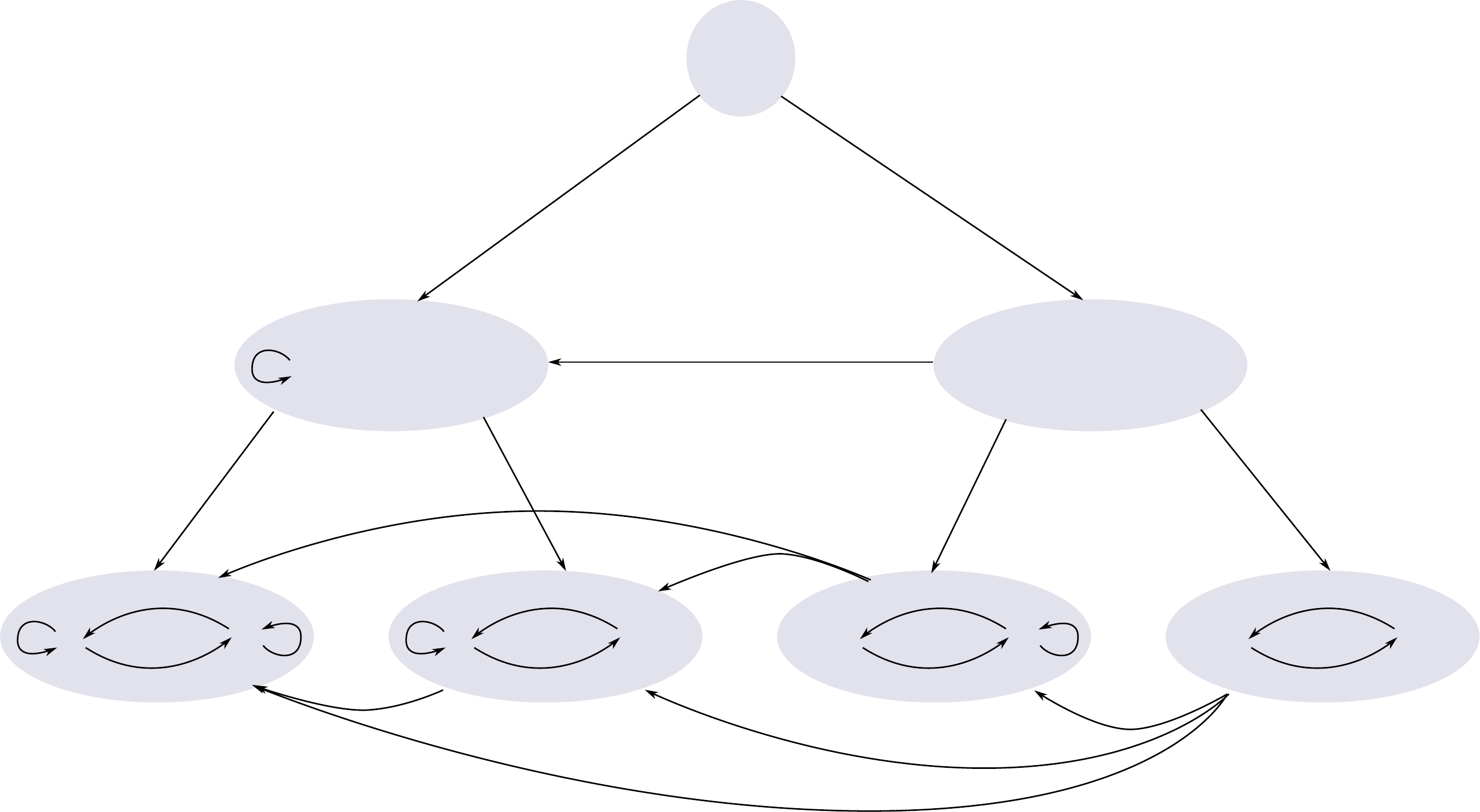
  \end{tiny}
\end{minipage}
  \caption{A fragment of the deliberative Kripke model for $\views$.}
  \label{tbl:dynamism-3}
\end{figure}

Let us assume that $\sem = \sf p$ is the preferred semantics. Then the following list gives some formulas that are true on $\kmod \views \sem$ at the point $(\emptyset,\emptyset)$, and the reader should easily be able to verify them by consulting the above fragment of $\kmod \views \sem$.

$$
\begin{array}{lll}
\ddia p \cbox p, & \adia \cbox p, & [p] \adia \cbox q, \\
\neg [p] \adia \cdia p, & \ddia p \adia \cbox \neg p, & \adia \adia (\cdia p \land \cdia q)
\end{array}
$$

We can also record some validities that are easy to verify against Definition \ref{def:main}. 

\begin{proposition}\label{prop:val}
The following formulas are all validities of $\dlangm$, for any $p,q \in \Pi$, $\phi \in \dlangm$. \begin{enumerate}
\item $\ddia p \ddia q \phi \leftrightarrow \ddia q \ddia p \phi$
\item $\ddia p \dbox q \phi \to \dbox q \ddia p \phi$
\item $\adia \abox \phi \to \abox \adia \phi$
\item $\ddia p \ddia p \phi \to \ddia p \phi$ 
\end{enumerate}
\end{proposition}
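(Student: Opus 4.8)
\emph{Proof sketch.} Since the labelling $\pis\sem(w)$ attached to a point $w\in\carriers\views$ depends on $w$ alone and not on the accessibility relations, the truth value of every subformula at a point is fixed once the point is, and so each of the four formulas is a Sahlqvist-type schema whose validity on $\kmod\views\sem$ follows from a purely relational (frame) condition. Writing $R\circ R'=\{(w,w'')\mid\exists w'\colon (w,w')\in R,\ (w',w'')\in R'\}$ for relational composition, the plan is to prove the following four conditions on the frame $(\carriers\views,\rels\views)$ and then invoke the usual soundness argument for such axioms: (1)~$\rels\views(p)\circ\rels\views(q)=\rels\views(q)\circ\rels\views(p)$; (2)~the Church--Rosser property, that $(w,s')\in\rels\views(p)$ and $(w,t)\in\rels\views(q)$ imply $(t,t')\in\rels\views(p)$ and $(s',t')\in\rels\views(q)$ for some $t'$; (3)~the same confluence for $\rels\views(\exists)$ with itself; and (4)~$\rels\views(p)\circ\rels\views(p)\subseteq\rels\views(p)$.

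The technical core is an interval lemma describing successors. Abbreviate $m(S)=\bigcap_{a\in\agents}\restr{V_a}{S}$ and $M(S)=\bigcup_{a\in\agents}\restr{V_a}{S}$; since restriction is monotone in its carrier, $S\subseteq S'$ implies $m(S)\subseteq m(S')$ and $M(S)\subseteq M(S')$, and moreover $w_E\subseteq M(w_S)\subseteq M(w_S\cup\{p\})$ for every point $w=(w_S,w_E)$. The lemma states that for every such $w$ and every $p\in\Pi$,
\[
succ(p,w)=\{\,(w_S\cup\{p\},E')\mid w_E\cup m(w_S\cup\{p\})\subseteq E'\subseteq w_E\cup M(w_S\cup\{p\})\,\}.
\]
The inclusion ``$\subseteq$'' is immediate from Equation~\ref{eq:update}; for ``$\supseteq$'' I would, given a target $E'$ in the interval, take the witness $X=m(w_S\cup\{p\})\cup(E'\cap M(w_S\cup\{p\}))$ and check that $X\in\update\views w p$ and $w_E\cup X=E'$.

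With the lemma in hand the four conditions are routine. For (4) a second $p$-update leaves the carrier $w_S\cup\{p\}$ and hence the bounds $m(w_S\cup\{p\}),M(w_S\cup\{p\})$ unchanged; since any reached edge set already contains $m(w_S\cup\{p\})$, composing the two intervals collapses them into the single-step interval. For (1), putting $T=w_S\cup\{p,q\}$ and using $m(w_S\cup\{p\})\subseteq m(T)$ and $M(w_S\cup\{p\})\subseteq M(T)$ (and symmetrically for $q$), both compositions send $w$ to exactly the set of states $(T,E')$ with $w_E\cup m(T)\subseteq E'\subseteq w_E\cup M(T)$, an interval symmetric in $p$ and $q$, so the two compositions coincide. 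For (2), given $(w,s')\in\rels\views(p)$ and $(w,t)\in\rels\views(q)$ I would take the common successor $t'=(T,\,s'_E\cup t_E\cup m(T))$ and verify via the lemma that $(t,t')\in\rels\views(p)$ and $(s',t')\in\rels\views(q)$, the only facts needed being $s'_E,t_E\subseteq M(T)$. Finally (3) is immediate from (2), since $\rels\views(\exists)=\bigcup_{p\in\Pi}\rels\views(p)$: if $t$ and $s'$ are reached from $w$ by updates on $q$ and $p$ respectively, the very same $t'$ witnesses confluence for $\rels\views(\exists)$.

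The only real obstacle is the bookkeeping: one must keep straight which carrier's bounds $m(\cdot),M(\cdot)$ are in play at each step and repeatedly use their monotonicity so that the smaller carrier's bounds are absorbed into the larger one's. Once the interval lemma is stated cleanly, each derivation reduces to checking two inclusions, and nothing depends on the semantics $\sem$ beyond its role in making $\pis\sem$ a well-defined valuation; these are frame validities.
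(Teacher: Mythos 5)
Your proof is correct; note, however, that the paper contains no proof of this proposition at all --- it merely remarks that the validities ``are easy to verify against Definition \ref{def:main}'' --- so there is no argument of the paper's to compare yours against, and what you have written is a genuine filling-in of that gap. Your reduction is the natural one: since the truth set of any $\phi \in \dlangm$ is a fixed subset of $\carriers \views$ (the valuation $\pis \sem$ depends only on the point), each schema follows from a purely relational condition on $(\carriers \views, \rels \views)$, and you only need the trivial soundness direction of correspondence theory, not full Sahlqvist machinery. The interval lemma is right, and your witness $X = m(w_S\cup\{p\})\cup(E'\cap M(w_S\cup\{p\}))$ does verify the nontrivial inclusion, using $w_E \subseteq E'$, which the lower bound of the interval guarantees. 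From the lemma, all four frame conditions follow by monotonicity of $m$ and $M$ in the carrier, exactly as you sketch; in particular your item (1) establishes equality of the two composite relations, which is what the biconditional needs, and item (3) does reduce to item (2) via $\rels \views(\exists) = \bigcup_{p \in \Pi} \rels \views(p)$, which is immediate from the definition of $succ(q)$. One check worth making explicit: that every pair $(w_S\cup\{p\},E')$ in your interval is actually a point of $\carriers \views$; this holds because $E' \subseteq w_E \cup M(w_S\cup\{p\})$ and $w_E \subseteq M(w_S) \subseteq M(w_S\cup\{p\})$, so faithful updates never leave the model. Compared with the paper's bare assertion, your argument makes explicit that the four validities owe nothing to the argumentation semantics $\sem$ and everything to the interval structure of faithful updates together with monotonicity of restriction --- a fact the paper leaves implicit.
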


We remark that $\dbox q \ddia p \phi \to \ddia p \dbox q \phi$ is \emph{not} valid, as witnessed for instance by the following basis $\views$, for which we have $\kmod \views p, (\emptyset,\emptyset) \models \dbox q \ddia p \cbox p$ but also $\kmod \views p, (\emptyset,\emptyset) \models \dbox p \ddia q \cbox q$ (as the reader may easily verify by considering the corresponding Kripke model).

$$
\views = \left\{\begin{array}{ll}V_a: \xymatrix{p \ar[r] & q} \\ V_b: \xymatrix{p & q \ar[l]} \end{array}\right\}$$

Finally, let us notice that as $\Pi$ is generally infinite, we must expect to encounter infinite bases. This means, in particular, that our Kripke models are often infinite. However, in the next section we show that as long as $\views$ is \emph{finitary}, meaning that no agent $a \in \agents$ has a view where an argument is attacked by infinitely many other arguments, we can solve the model-checking problem also on infinite models.

\section{Model checking on finitary models}\label{sec:mcheck}

Towards this result, we now introduce some notation and a few abstractions to simplify our further arguments. We will work with labeled trees, in particular, where we take a tree over labels $X$ to be some
 non-empty, prefix-closed subset of $X^*$ (finite sequences of elements of $X$). Notice that trees thus defined contain no infinite sequences. This is intentional, since we will ``shrink'' our models (which may contain infinite sequences of related points), by mapping them to trees. To this end we will use the following structures. 

\begin{definition}\label{def:iaf} Given a basis $\views$, we define $\iterate \views$, a set of sequences over $\Pi \times 2^\Pi$ labeled by AFs, defined inductively as follows
\begin{description}
\item [Base case:] $\epsilon \in \iterate \views$ and is labeled by the AF $\af(\epsilon) = (S(\epsilon),E(\epsilon))$ where $S(\epsilon) = \emptyset = E(\epsilon)$.
\item [Induction step:] If $x \in \iterate \views$, then for any $p \in \Pi$ and any partial assent $X = \update \views x p$, we have $x;(p, X) \in \iterate \views$ labeled by the AF $\af(x;(p,X))$ where $S(x;(p, X)) = S(x) \cup \{p\}$ and $E(x;(p, X)) = E(x) \cup X$.
\end{description}
%Infinite IAFs are infinite sequences of the form $x_1;x_2;x_3;\dots$ such that, for all $i \geq 1$, there is $p \in \Pi$ and an $X = \update \views {(S(x_i),E(x_i))} p$ such that $x_{i+1} = (p, X)$. An argument $p \in \Pi$ is in the %argument set of this IAF, that is $p \in S(x_1;x_2;x_3;\dots)$ if, and only if, there is an $i$ such that $x_i = (p, \_)$ (second component is irrelevant). (Similarly for edges.)
\end{definition}
%The labels we will be interested in are representations of applications of $\sem$ in any given state. Let $\emptyset \subseteq V \subseteq \bigcup_{a \in \agents} V_a$ be some graph. All graphs actually generated with our %construction will be included here, but not neccessarily conversely. 
To adhere to standard naming we use $\epsilon$ to denote the empty string. It should not be confused with the argumentation semantics $\sem$. This will also be clear from the context.
We next define tree-representations of our Kripke models.

\begin{definition}\label{def:treerep} Let $\kmod \views \sem$ be some model. The \emph{tree representation} of $\kmod \views \sem$ is the set $T$, together with the representation map $\gamma: \carriers \views \to 2^T$, defined inductively as follows
\begin{description}
\item[Base case] $\epsilon \in T$ is the root with $\gamma((\emptyset, \emptyset)) = \{\epsilon\}$.
\item[Induction step] For any $x \in T, q \in \kmod \views \sem$ with $x \in \gamma(q)$ and $q' \in succ(q)$ witnessed by $p \in \Pi$ and $X \in \update \views x p$, we have $x;(p, X) \in T$ with $q' \in \gamma(x;(p,X))$.
%\item[Closure] $(T,\gamma)$ is the pair consisting of the smallest $T,\gamma$ which satisfies the above.
\end{description}
\end{definition}

Notice that the tree-representation is a tree where each node is an element of $\iterate \views$. Some single states in $\kmod \views \sem$ will have several representations in a tree. That is, $\gamma(q)$ may not be a singleton. On the other hand, it is easy to see that for every state $q \in \kmod \views \sem$, and every path from $(\emptyset, \emptyset)$ to $q$, there will be a node $x \in T$ such that $q \in \gamma(x)$.

The main result of our paper is that model checking $\dlangm$-truth at $(\emptyset,\emptyset)$ is tractable as long as all views are \emph{finitely branching}, i.e., such that for all $a \in \agents, p \in \Pi$, $p$ has only finitely many attackers in $V_a$. Clearly this requires shrinking the models since the modality $\adia$ quantifies over an infinite domain whenever $\Pi$ is infinite. We show, however, that attention can be restricted to arguments from $\Pi$ that are \emph{relevant} to the formula we are considering. To make the notion of relevance formal, we will need the following measure of complexity of formulas.
%
%We now consider the problem of verifying that a model $K = \langle \carriers \views, R, \sem \rangle$ based on $\views$ satisfies a formula $\phi \in \langa$. Both $K$ and $\views$ might be infinite, and ocationally, very infinite! Suppose $\Pi$ is countably infinite, then (for simplicity) starting in the empty \state $(\emptyset, \emptyset) \in \carriers \views$. There are now exactly $|\Pi|$ possible successors, one for each argument. More generally, if an infinite number of arguments have already been introduced into the \state, and in at least some view $V_a$, there are arguments with infinite branching, we might have an infinite number of arguments to introduce, any number of which can be introduced in possibly an infinite number of ways (as per ref. N).

\begin{definition}\label{def:depth} The \emph{white modal depth} of $\phi \in \dlangm$ is $\depth{\phi} \in \mathbb N$, which is defined inductively as follows 
\begin{align*}
\depth{\alpha} \quad & := \quad 0 & \text{no white connectives in these formulas}\\
\depth{\cdia \alpha} \quad & := \quad 0 \\
\depth{\neg \phi} \quad &:= \quad \depth{\phi} & \text{depth is deepest nesting of }\\
\depth{\phi \wedge \psi} \quad &:= \quad \max\{\depth{\phi}, \depth{\psi}\} & \text{white connectives}\\
\depth{\adia \phi} \quad &:= \quad 1 + \depth{\phi} \\
\depth{\ddia p \phi} \quad &:= \quad 1 + \depth{\phi} \\
%\depth{\ddia \pi \phi} \quad &:= \quad \depth{\tau(\ddia \pi \phi)} & \text{unwrap complex programs}\\
\end{align*}
\end{definition}

We let $\restr \Pi \phi$ denote the set of arguments occurring in $\phi$ in sub-formulas from $\lblack$. Notice that given a state $q \in \carriers \views$, the satisfaction of a formula of the form  $\phi = \cdia \alpha$ at the AF encoded by $q$ is not dependent on the entire digraph $q = (q_S,q_E)$.

Indeed, this is what motivated our definition of normality for an argumentation semantics, leading to the following simple lemma, which is the first step towards shrinking Kripke structures for the purpose of model checking. Given a model $\kmod \views \sem$ and a state $q \in \carriers \views$, we let $\comp q \Phi$ denote the digraph consisting of all connected components from $q$ which contains a symbol from $\Phi$. Then we obtain the following.

\begin{lemma}\label{lemma:comp}Given a semantics $\sem$ and two bases $\views$ and $\views'$, we have, for any two states $q \in \kmod \views \sem$ and $q' \in \kmod {\views'} \sem$ and for any formula $\phi \in \dlangm$ with $\depth \phi = 0$:
$$
\big(\comp q {\restr \Pi \phi} = \comp {q'} {\restr \Pi \phi}\big) \Rightarrow \big(\kmod \views \sem,q \vDash \phi \Leftrightarrow \kmod {\views'} \sem, q'  \vDash \phi \big)$$
\end{lemma}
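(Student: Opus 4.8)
The plan is to prove the lemma by induction on the structure of formulas $\phi$ with $\depth{\phi} = 0$. The grammar restricts such formulas substantially: since each occurrence of $\adia$ or $\ddia p$ would contribute $1$ to the white modal depth, a formula with $\depth{\phi} = 0$ contains no white dynamic modalities at all. Hence $\phi$ is built up from atoms of the form $\cdia \alpha$ (with $\alpha \in \lblack$) using only $\neg$ and $\wedge$. This observation is the structural backbone of the argument and reduces the induction to three cases: the base case $\phi = \cdia \alpha$, and the two boolean cases.

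First I would dispatch the boolean cases, which are immediate. For $\phi = \neg \psi$ and $\phi = \psi_1 \wedge \psi_2$, note that $\restr \Pi \phi \supseteq \restr \Pi \psi$ (resp. $\restr \Pi{\psi_i}$) and $\depth \psi = 0$ (resp. $\depth{\psi_i} = 0$), so the hypothesis $\comp q{\restr \Pi \phi} = \comp {q'}{\restr \Pi \phi}$ restricts to equality of the relevant sub-components for the sub-formulas, and the inductive hypothesis applies. The semantic clauses for $\neg$ and $\wedge$ in Definition \ref{def:ddlm} then transfer truth across $q$ and $q'$ directly. The only mild care needed is to check that the components relevant to a sub-formula are exactly those selected by $\comp{\cdot}{\restr \Pi \psi}$, which follows since $\restr \Pi{\cdot}$ collects arguments occurring in $\lblack$-subformulas and component-selection is monotone in the argument set.

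The heart of the argument is the base case $\phi = \cdia \alpha$. Here I would unfold the semantics: $\kmod \views \sem, q \vDash \cdia \alpha$ iff there is a labelling $\pi \in \pis \sem(q)$ with $\overline \pi(\alpha) = 1$. By definition of $\pis \sem$, the first coordinate $\pi_1$ ranges over $\sem(q)$, i.e. over the extensions assigned by the semantics to the AF encoded by $q$. The key step invokes \emph{normality} (Equation \ref{eq:normal}): any $A \in \sem(q)$ decomposes as a union $\bigcup_i A_i$ with $A_i \in \sem(C^q_i)$ over the maximal connected components $C^q_i$ of $q$, and conversely any such union is in $\sem(q)$. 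Since $\overline\pi(\alpha)$ depends only on the truth values $\overline\pi(p)$ for the finitely many $p \in \restr \Pi \alpha$, and since each such $p$ lies in $\comp q{\restr \Pi \phi}$ together with its entire connected component, the value $\overline\pi(\alpha)$ is determined by how $\pi$ restricts to $\comp q{\restr \Pi \phi}$. Using normality to split the choice of extension into a choice on $\comp q{\restr \Pi \phi}$ and an independent choice on the remaining components, one sees that the existence of an extension making $\alpha$ true depends only on $\comp q{\restr \Pi \phi}$. The hypothesis $\comp q{\restr \Pi \phi} = \comp {q'}{\restr \Pi \phi}$ then yields the desired equivalence.

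The step I expect to be the main obstacle is making precise the interaction between the three-valued labelling and normality in the base case. Two subtleties must be handled carefully: first, arguments outside $q_S$ receive the default value $\frac12$, so one must confirm that any $p \in \restr \Pi \alpha$ occurring with a non-$\frac12$ value indeed sits inside a genuine component of $q$ (and is thus captured by $\comp q{\restr \Pi \phi}$), while arguments absent from both graphs contribute $\frac12$ on both sides identically. Second, normality as stated governs $\pi_1$, the set of accepted arguments, and I would need to check that the induced $\pi_0$ and $\pi_{\frac12}$ components of the labelling $\pi$ also localise to the relevant components, so that $\overline\pi$ restricted to $\restr \Pi \alpha$ is genuinely insensitive to the discarded part of the graph. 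Once the decomposition of labellings along connected components is set up cleanly, the equivalence follows, but getting the bookkeeping of the default value and the three partitions exactly right is where the real work lies.
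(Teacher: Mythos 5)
Your proof is correct and follows essentially the approach the paper intends: the paper actually states this lemma without proof, treating it as an immediate consequence of normality (the same reasoning is simply asserted again in the base case of the paper's final Proposition), and your structural induction --- trivial boolean steps plus a base case that decomposes extensions along connected components via normality, with the bookkeeping for the default value $\frac12$ and the localisation of $\pi_0$ --- is precisely that argument made explicit. The one caveat, which you inherit from the paper rather than introduce, is that splitting an extension into ``a choice on $\comp q {\restr \Pi \phi}$ and an independent choice on the remaining components'' tacitly assumes $\sem$ assigns a non-empty set of extensions to every component; for a normal semantics that can be empty (e.g.\ stable semantics, where a discarded component consisting of a single self-attacking argument makes $\sem(q')$ empty and hence every $\cdia\alpha$ false at $q'$), the lemma itself fails, so this is a defect of the statement, not of your proof.
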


In order to complete our argument in this section, we will make use of $n$-bisimulations modulo a set of symbols.

\begin{definition}\label{def:bisim} Given two models (with possibly different bases, but with common set of symbols $\Pi$ and semantic $\sem$) $K_\views = \langle \carriers \views, R, \sem \rangle$ and $K_\views' = \langle \carriers {\views'}, R', \sem\rangle$, states $q \in \carrier$ and $q' \in \carrier'$, a natural number $n$ and a set $\Phi \subseteq \Pi$, then we say that $q$ and $q'$ are $n$-bisimilar modulo $\Phi$ (denoted $(K_\views, q) ~ \bisim_n^\Phi~ (K_{\views'}, q')$), if, and only if, there are $n+1$ relations relation $Z_n \subseteq Z_{n-1} \subseteq \dots \subseteq Z_0 \subseteq \carrier \times \carrier'$ such that
\begin{enumerate}
\item $q Z_n q'$, 
\item whenever $(v, v') \in Z_0$, then $C(v, \Phi) = C(v', \Phi)$, 
\item whenever $(v, v') \in Z_{i+1}$ and $vRu$, then there is a $u'$ s.t. $v'R'u'$ and $uZ_{i}u'$, 
\item whenever $(v, v') \in Z_{i+1}$ and $v'R'u'$, then there is a $u$ s.t. $vRu$ and $uZ_i u'$.
\end{enumerate}
\end{definition}

Let us now also define a particular subset of arguments, the arguments which have at most distance $n$ from some given set of arguments: 

\begin{definition}\label{def:vicinity}Given a \basis $\views = \viewsv$, a subset $\Phi \subseteq \Pi$ and a number $n$, the $n$-vicinity of $\Phi$ is $D(\views, \Phi, i) \subseteq \Pi$, defined inductively as follows
\begin{align*}
D(\views, \Phi, 0) &= \Phi \\
D(\views, \Phi, n+1) &= D(\views, \Phi, n) \\ 
&\cup \left\{~p \in \Pi ~|~ \exists q \in D(\views, \Phi, n): \{(p,q),(q,p)\} \cap \bigcup_{a \in \agents}V_a \not = \emptyset ~\right\}\\
\end{align*}
\end{definition}

Notice that as long as $\Phi$ is finite and all agents' views have finite branching, then the set $D$ is also finite. Also notice that an equivalent characterization of the set $D(\viewsv,\Phi,i)$ can be given in terms of paths as follows: an argument $p \in \Pi$ is in $D(\views,\Phi,i)$ if, and only if, there is a path $p=x_1x_2\ldots x_n$ in $\bigcup_{a \in \agents}V_a$ such that $x_n \in \Phi$ and $n \leq i$ (we consider an argument $p$ equivalently as an empty path at $p$). 

\begin{definition}\label{def:shrink} Given a formula $\phi \in \dlangm$. Let $\viewsv$ be a possibly infinite \basis, we define $\shrink \phi \viewsv$ such that
\begin{itemize}
\item for every $a \in \agents$, $\shrink \phi {V_a} ~:=~ V_a \cap D(V_a, \restr \Pi \phi, \depth{\phi})$
\end{itemize}
\end{definition}

Notice that the Kripke model for $\rho(\views)$ will have finite branching as long as the argument symbols in the $\depth{\phi}$-vicinity of the argument symbols in $\phi$ have finite branching in all agents' views. In the following, we
 will show that for any finitely branching $\views$ and normal $\sem$, we have $\kmod \views \sem, (\emptyset,\emptyset) \models \phi$ if, and only if, $\kmod {\shrink \phi \views} \sem, (\emptyset,\emptyset) \models \phi$. 

\begin{theorem} Let $\views$ be an arbitrary \basis, and $\phi \in \dlangm$. 
$$ \left(\kmod \views \sem, (\emptyset,\emptyset)\right) \quad   \bisim_{\depth{\phi}}^{\restr \Pi \phi}  \quad \left(\kmod {\shrink \phi \views} \sem, (\emptyset,\emptyset)\right)$$
\end{theorem}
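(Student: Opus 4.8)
The plan is to exhibit the required chain of relations $Z_{\depth\phi}\subseteq\cdots\subseteq Z_0$ directly. Write $n=\depth\phi$, $\Phi=\restr\Pi\phi$, and let $D_n=D(\views,\Phi,n)$ be the $n$-vicinity of $\Phi$; this is the region in which all the action takes place. Since the bisimulation of Definition~\ref{def:bisim} is stated for a single relation, I read both models with $R=\rels\views(\exists)=\bigcup_{p\in\Pi}\rels\views(p)$, i.e.\ the ``some update'' relation, and I will in fact build witnesses that always add the \emph{same} argument $p$ on both sides, so that the construction simultaneously respects each labelled relation $\rels\views(p)$. The single structural fact I rely on is that the full basis $\views$ and the shrunk basis $\shrink\phi\views$ \emph{coincide on $D_n$}: every edge of $\bigcup_{a}V_a$ lying inside $D_n\times D_n$ survives the shrinking and no new edge is created, so the intersection/union faithfulness bounds of Equation~\ref{eq:update} agree on $D_n\times D_n$ for the two bases.

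For $0\le i\le n$ I would set
$$Z_i=\{(v,v')\mid v_S=v'_S,\ v_E\cap(D_n\times D_n)=v'_E\cap(D_n\times D_n),\ |v_S|\le n-i\},$$
relating a state $v\in\carriers\views$ of the full model with a state $v'\in\carriers{\shrink\phi\views}$ of the shrunk one. Nesting $Z_n\subseteq\cdots\subseteq Z_0$ is immediate because the bound $|v_S|\le n-i$ only weakens as $i$ decreases, and clause~1 holds since $((\emptyset,\emptyset),(\emptyset,\emptyset))\in Z_n$. The point of the support bound is clause~2: if $(v,v')\in Z_0$ then $|v_S|\le n$, so every connected component of $v$ meeting $\Phi$ has at most $n$ vertices and hence, by the path characterisation of the vicinity, lies entirely inside $D_n$; as $v$ and $v'$ agree on $D_n\times D_n$ and have the same support, the two graphs $\comp v\Phi$ and $\comp{v'}\Phi$ coincide. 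Thus the $n$-vicinity is exactly large enough to contain the components that $n$ steps of deliberation can grow around $\Phi$.

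For the forth clause, given $(v,v')\in Z_{i+1}$ and an $R$-successor $u$ of $v$ obtained by adding $p$ with faithful edge set $X\in\update\views v p$, I would answer with $u'=(v'_S\cup\{p\},\,v'_E\cup X')$ where $X'=X\cap(D_n\times D_n)$; the back clause is symmetric, answering a shrunk-model move by any full update $X$ with $X\cap(D_n\times D_n)=X'$ (for instance $X=X'\cup(\bigcap_{a}\restr{V_a}{v_S\cup\{p\}}\setminus(D_n\times D_n))$). In both directions the newly added argument is the same $p$, the supports stay equal, the cardinality grows by at most one so that $|u_S|\le n-i$, and equality of edges on $D_n\times D_n$ is preserved because only $X\cap(D_n\times D_n)=X'$ contributes there. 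It remains to check that the answering edge sets are \emph{legal} updates, i.e.\ lie between the intersection and union faithfulness bounds of their models; this is where the coincidence of the two bases on $D_n$, together with the inclusions $\shrink\phi{V_a}\subseteq V_a$, is used: the shrunk intersection is contained in $X\cap(D_n\times D_n)$ and the shrunk union equals $(\bigcup_a\restr{V_a}{v_S\cup\{p\}})\cap(D_n\times D_n)$, which pins $X'$ between the shrunk bounds, and dually for the back direction.

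The main obstacle, and the step I would spend the most care on, is precisely this legality bookkeeping together with the matching of the two notions of vicinity: the shrinking in Definition~\ref{def:shrink} is computed per agent via $D(V_a,\Phi,n)$, whereas connectivity of $\comp v\Phi$ travels along $\bigcup_a V_a$ and is governed by $D(\views,\Phi,n)$, so I must verify that no edge inside the region that can affect a $\Phi$-component is discarded by the per-agent shrinking --- this is what ultimately justifies the claim that the two bases coincide on $D_n$. Everything else (nesting, clause~1, the cardinality accounting, preservation of $D_n$-agreement) is routine once the relation is set up. Finally, because the witnesses preserve the generating argument, the chain $(Z_i)$ is in fact a labelled $n$-bisimulation, which is the form needed --- using Lemma~\ref{lemma:comp} at the base case for the depth-zero formulas $\cdia\alpha$ --- to transfer truth of $\phi$ between $\kmod\views\sem$ and $\kmod{\shrink\phi\views}\sem$ at $(\emptyset,\emptyset)$.
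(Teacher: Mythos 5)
Your construction is, in substance, the paper's own proof: the paper likewise exhibits the chain $Z_n \subseteq \cdots \subseteq Z_0$ by relating states whose edge sets agree on the square of the $n$-vicinity, proves clause 2 of Definition~\ref{def:bisim} by exactly your path-length argument (at most $n$ arguments are present, and any path from $\Pi \setminus D$ to $\Phi$ needs more than $n$ nodes), and discharges the forth and back clauses by answering a full-model update $X$ with $X \cap (D \times D)$ and a shrunk-model update $X'$ with some legal $X$ satisfying $X \cap (D \times D) = X'$. The only cosmetic difference is that the paper defines the $Z_i$ on the tree representations of the two models and then lifts them to states, whereas you define them directly on states with the cardinality bound $|v_S| \le n - i$; the two encodings describe the same relation.

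The step you flag as your ``main obstacle'' deserves the attention you give it, because it cannot in fact be verified: under the literal Definition~\ref{def:shrink}, which computes the vicinity per agent, your ``single structural fact'' is false, and so is the theorem. Concretely, let $\agents = \{a,b\}$, $V_a = \{(x,y)\}$, $V_b = \{(y,z)\}$, $\Phi = \{z\}$, and $n = \depth{\phi} = 3$ (e.g.\ $\phi = \adia\adia\adia\cdia z$). Since no edge of $V_a$ touches $z$, we get $D(V_a,\Phi,3) = \{z\}$ and hence $\shrink \phi {V_a} = \emptyset$, even though $(x,y)$ lies inside $D \times D$ for the joint vicinity $D = D(\views,\Phi,3) = \{x,y,z\}$: the two bases do not coincide on $D$. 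Worse, the full model reaches $q = (\{x,y,z\},\{(x,y),(y,z)\})$ in three steps, and $\comp q {\{z\}}$ is the three-node chain, while every state of $\kmod {\shrink \phi \views} \sem$ has edge set contained in $\{(y,z)\}$, so no state of the shrunk model has that $z$-component; hence no chain of relations can satisfy clauses 1--4, and the two models are not $3$-bisimilar modulo $\{z\}$. Your structural fact, your legality bookkeeping, and the theorem itself all become correct once $\rho$ is read with the joint vicinity, i.e.\ $\shrink \phi {V_a} := V_a \cap \bigl(D(\views,\restr \Pi \phi,\depth{\phi}) \times D(\views,\restr \Pi \phi,\depth{\phi})\bigr)$ --- and this is the reading the paper's proof silently adopts, since it fixes $D = D(\views,\Phi,n)$ once and for all and appeals to ``the definition of $\rho$'' for the forth/back clauses without ever returning to the per-agent sets $D(V_a,\Phi,n)$. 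So: same approach, and the same correctness status as the paper's own argument; but the step you postponed is not routine bookkeeping --- it is a flaw in the paper's definition that must be repaired by the joint reading, not verified.
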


%\begin{proof} 
%\paragraph{Outline:} First we define the relations $Z_i$ which will witness the $n$-bisimulation. Towards this, we view both models as trees, i.e., that each state has a unique successor. The tree can be seen as a set of paths. We are interested in the finite postfixes of length up to $n$. The nodes in the path can be seen, rather than as graphs (AFs), as sets of colors of argument symbols. I.e., a node is now identified with a sequence of sets of vectors, where the first entry, representing the root, is the singleton set consisting of the $\Pi$-vector containing only $\hf$-s. 
%\end{proof}

\begin{proof}
Let $\kmod \views \sem$ be an arbitrary model and let $T$ denote its tree representation, while $T'$ denotes the tree representation of $\kmod {\rho_\phi(\views)} \sem$.

We take $n = \depth{\phi}$ and let $\Phi$ be the atoms occurring in $\phi$ inside the scope of some $\cdia$-operator. Moreover, for brevity, we denote $D = D(\views, \Phi, n)$. % In the proof we'll refer to the set of possible attacks among arguments in $D$ as $D^2$.

\paragraph{Definition of $(Z_i)_{(0 \leq i \leq \depth{\phi})}$: } 
We define all the relations $Z_i$ inductively using the tree-representations as follows.
\begin{description}
\item[Base case: ] ($i = 0$) For all $0 \leq i \leq n$, we let $\epsilon Z_i \epsilon$. 
\item[Induction step: ] ($0 < i \leq n$) For all $y = x;(v, X) \in T$ and $y' = x';(v', X') \in T'$, both of length $i$, with $x (Z_{i+1}) x'$. We let, for every $k \leq i$, $y (Z_k) y'$ if, and only if, $v = v'$, \emph{and} $X \cap (D \times D)= X'$.
\end{description}

Notice that if $x (Z_i) x'$, then $S(x) = S(x')$ and $|S(x)| \leq (n - i)$. Moreover, by consulting Definition \ref{def:treerep} it is not hard to see that for all $q \in  \carriers \views , q' \in \carriers {\rho(\views)}$ we have, for all $0 \leq i \leq n$ and all $q \in \carriers \views,q' \in \carriers {\rho(\views)}$: 
$$
\forall x_1,x_2 \in \gamma(q): \forall x'_1,x'_2 \in \gamma(q'): x_1(Z_i)x_2 \iff x'_1 (Z_i) x'_2 
$$
This means, in particular, that the following lifting of $(Z_i)_{0 \leq i \leq n}$ to models is well-defined, for all $q \in \carriers \views, q' \in \carriers {\rho(\views)}$ and all $0 \leq i \leq n$:
$$
q(Z_i)q' \iff x(Z_i)x' 
$$
for some $x \in \gamma(x), x' \in \gamma(q')$.

Next we show that $(Z_i)_{0 \leq i \leq n}$ so defined is an n-bisimulation between $\kmod \views \sem$ and $\kmod {\rho(\views)} \sem$.

\paragraph{$(Z_i)_{0 \leq i \leq n}$ witnesses $n$-bisimulation:} We address all the points of the definition of $n$-bisimulation modulo $\Phi$ in order.
\begin{enumerate}
\item Clearly, $(\emptyset, \emptyset) Z_n (\emptyset, \emptyset)$. Hence the first condition of the definition is satisfied. 
\item Consider any arbitrary states $q,q'$ and let $x = x_1;x_2;\dots;x_m$ and $x' = x_1';x_2';\dots;x_m'$ be the corresponding nodes from $T,T'$ that witnesses to $q(Z_0)q'$. By definition of $Z_0$ we have $S(x) = S(x')$, but it is possible that we have $E(x) \not = E(x')$. However, we must have $C(\af(x), \Phi) = C(\af(x'), \Phi)$, and to see this, it is enough to observe that as $m \leq n$, each of $x$ and $x'$ contains at most $n$ nodes. Then, since $\af(x) = q$ and $\af(x') = q'$ are the same on $D$, and the distance from $\Pi \setminus D$ to $\Phi$ is greater than $n$. That is, any path from an argument in $\Pi \setminus D$ to an argument in $\Phi = \restr \Pi \phi$ would be a path consisting of at least $n+1$ nodes. It follows that no element from $\Phi$ can be in a connected components containing elements outside of $D$.
\item Consider now $q,q'$ corresponding to $x$ and $x'$ such that $x (Z_{i + 1}) x'$. Notice that $(q,r) \in \rels \views(\exists)$ if, and only if, there is a $(p, X)$ such that $x R (x;(p, X))$. So all we need to show is that $X \cap (D \times D)$ is in $\update {\rho_\phi(\views)} {x'} p$. Then it will follow that there is a successor to $x'$, namely $(p, X \cap (D \times D))$, with $(x')R'(x';(p, X \cap (D \times D)))$. This is a straightforward consequence of the Definition \ref{def:shrink} of $\rho$. The argument for the particular sub relations $\rels \views(p)$ is analogous. 
\item Finally consider $q,q'$ corresponding to $x$ and $x'$ such that $x (Z_{i + 1}) x'$ for $(p, X)$ such that $x' R (x';(p, X'))$. Again we need to ensure that there is an $X \in \update \views x p$ such that $X' = X \cap (D \times D)$, and again this follows from the Definition \ref{def:shrink} of $\rho$. The argument for the particular sub relations $\rels \views(p)$ is analogous. 
\end{enumerate}
\end{proof}

\begin{proposition} Let $\phi \in \dlangm$ and $\views, \views'$ arbitrary bases. If states $q \in \kmod \views \sem$ and $q' \in \kmod {\views'} \sem$ are $\depth{\phi}$-bisimilar modulo $\restr \Pi \phi$, then $\kmod \views \sem, q \models \phi ~\Leftrightarrow~ \kmod {\views'} \sem , q' \models \phi$. Or, succinctly
$$ \left((\kmod \views \sem, q) ~\bisim_{\depth{\phi}}^{\restr \Pi \phi}~ (\kmod {\views'} \sem, q')\right) ~\Rightarrow~ \left(\kmod \views \sem, q \models \phi ~\Leftrightarrow~ \kmod {\views'} \sem, q' \models \phi\right).$$
\end{proposition}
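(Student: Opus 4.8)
The plan is to prove, by induction on the structure of $\phi$, the stronger statement that quantifies over all models and states: for every $\phi$, all bases $\views,\views'$ and all $q \in \kmod \views \sem$, $q' \in \kmod {\views'} \sem$, we have $(\kmod \views \sem, q) \bisim_{\depth \phi}^{\restr \Pi \phi} (\kmod {\views'} \sem, q') \Rightarrow \big(\kmod \views \sem, q \models \phi \Leftrightarrow \kmod {\views'} \sem, q' \models \phi\big)$. Phrasing it this way keeps the induction hypothesis available at every sub-formula, model and state. Throughout I read the single relation of Definition \ref{def:bisim} as shorthand for the whole family $\{\rels \views(p)\}_{p \in \Pi} \cup \{\rels \views(\exists)\}$, requiring the back/forth clauses (3)--(4) to hold for each of these relations separately; this is exactly the reading adopted in the proof of the preceding theorem, where the case of $\rels \views(p)$ is declared analogous to that of $\rels \views(\exists)$.

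Before the induction I would isolate a monotonicity lemma for the parameterised bisimulation: if $(\kmod \views \sem, q) \bisim_n^\Phi (\kmod {\views'} \sem, q')$ and $m \leq n$, $\Psi \subseteq \Phi$, then $(\kmod \views \sem, q) \bisim_m^\Psi (\kmod {\views'} \sem, q')$. Given the witnessing chain $Z_n \subseteq \dots \subseteq Z_0$, the shifted chain $Z'_i := Z_{i + (n-m)}$ for $0 \leq i \leq m$ witnesses the weaker bisimulation: $q Z'_m q'$ is just $q Z_n q'$, the inclusions and the back/forth clauses transfer verbatim, and the base clause needs only that $(v,v') \in Z'_0 = Z_{n-m} \subseteq Z_0$ forces $\comp v \Phi = \comp {v'} \Phi$, which in turn must force $\comp v \Psi = \comp {v'} \Psi$. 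This last implication is the one combinatorial point: a connected component of $v$ that meets $\Psi$ also meets $\Phi$ (as $\Psi \subseteq \Phi$), so it is one of the components comprising $\comp v \Phi$; since $\comp v \Phi$ and $\comp {v'} \Phi$ agree as digraphs and $\Psi$ is fixed, exactly the same components are selected on both sides, giving $\comp v \Psi = \comp {v'} \Psi$.

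With this in hand the induction is routine. For the base case $\phi = \cdia \alpha$ we have $\depth \phi = 0$, so $q Z_0 q'$ directly yields $\comp q {\restr \Pi \phi} = \comp {q'} {\restr \Pi \phi}$ by clause (2), and Lemma \ref{lemma:comp} closes the case. Negation is immediate since $\depth{\neg \phi} = \depth \phi$ and $\restr \Pi {\neg \phi} = \restr \Pi \phi$, so the very same bisimulation applies to the sub-formula. For $\phi = \psi_1 \wedge \psi_2$ we have $\depth \phi = \max\{\depth{\psi_1}, \depth{\psi_2}\}$ and $\restr \Pi \phi = \restr \Pi {\psi_1} \cup \restr \Pi {\psi_2}$; the monotonicity lemma specialises the given bisimulation to $\bisim_{\depth{\psi_i}}^{\restr \Pi {\psi_i}}$ for each $i$, and the hypothesis applied to $\psi_1,\psi_2$ gives the equivalence. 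For the modal cases $\phi = \ddia p \psi$ and $\phi = \adia \psi$ we have $\depth \phi = 1 + \depth \psi$ and $\restr \Pi \phi = \restr \Pi \psi$; if $q$ satisfies $\phi$ via a successor $r$ (through $\rels \views(p)$ or $\rels \views(\exists)$), then since $q Z_{\depth \phi} q'$ and $\depth \phi = \depth \psi + 1$, the forth clause supplies a matching successor $r'$ with $r Z_{\depth \psi} r'$; the truncated chain $Z_0 \subseteq \dots \subseteq Z_{\depth \psi}$ rooted at $r Z_{\depth \psi} r'$ witnesses $(\kmod \views \sem, r) \bisim_{\depth \psi}^{\restr \Pi \psi} (\kmod {\views'} \sem, r')$, so the hypothesis transports $r \models \psi$ to $r' \models \psi$ and hence $q' \models \phi$; the converse direction uses the back clause symmetrically.

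I expect the genuinely delicate step to be the conjunction case, or rather the monotonicity lemma it rests on, and in particular the claim that equality of the $\Phi$-relevant components propagates down to $\Psi$-relevant components for $\Psi \subseteq \Phi$; this is the only place where the interaction between the symbol parameter and the connected-component semantics of $\comp \cdot \cdot$ must be handled with care. The modal cases, by contrast, are standard Hennessy--Milner bookkeeping once one notes that $\restr \Pi \cdot$ is unchanged by prefixing a white modality, so that the \emph{same} modulus $\Phi$ governs the whole truncated chain and no re-parameterisation is needed. A minor point worth flagging is the single-relation-versus-family reading of Definition \ref{def:bisim} noted above, since the connectives $\ddia p$ and $\adia$ appeal to different relations of the model.
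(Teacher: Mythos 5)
Your proof is correct, and its skeleton matches the paper's: the base case is closed by clause (2) of Definition \ref{def:bisim} together with normality (Lemma \ref{lemma:comp}), and the modal cases are closed by the back/forth clauses plus the induction hypothesis, under the same family-of-relations reading of Definition \ref{def:bisim} that the paper adopts implicitly. Where you differ is in the organization: the paper inducts on $\depth{\phi}$ and dismisses the boolean cases as ``trivial,'' whereas you induct on the structure of $\phi$ and route the boolean cases through an explicit monotonicity lemma ($\bisim_n^\Phi$ implies $\bisim_m^\Psi$ for $m \leq n$, $\Psi \subseteq \Phi$, via the shifted chain and the component argument). This is a genuine improvement in rigor, not just pedantry: under a literal induction on depth, the conjunction case $\phi = \psi_1 \wedge \psi_2$ with $\depth{\psi_i} = \depth{\phi}$ cannot invoke the inductive hypothesis at all, and even when the depth does drop, one still needs the modulus to shrink from $\restr \Pi \phi$ to $\restr \Pi {\psi_i}$ --- precisely the two transfers your lemma licenses and the paper never justifies. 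Your identification of the component-propagation step ($\comp v \Phi = \comp {v'} \Phi$ forcing $\comp v \Psi = \comp {v'} \Psi$) as the one nontrivial point is accurate, and your argument for it --- components meeting $\Psi$ are among the connected components of the common digraph $\comp v \Phi$, so the same ones are selected on both sides --- is sound. In short: same strategy as the paper, but your decomposition fills a real gap in the published argument.
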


\begin{proof} The proof is by induction on $\depth{\phi}$. 
\begin{description}
\item[Base case:] ($\depth{\phi} = 0$) There are no white connectives, and our states, $q$ and $q'$, are clearly 0-bisimilar modulo $\Phi$. It is also easy to see, consulting Definition \ref{def:main}, that the truth of a formula of modal depth $0$ is only dependent on the AF $q$. Then it follows from the fact that $\sem$ is assumed to be normal that the truth of $\phi$ is in fact only dependent on $C(q, \Phi)$. From $q (Z_0) q'$, we obtain $C(q, \Phi) = C(q', \Phi)$ and the claim follows. 
\item[Induction step:] ($\depth{\phi} > 0$) We skip the boolean cases as these are trivial, so let $\phi := \adia \psi$ (the case of white connectives with an explicit argument is similar). Suppose $\depth{\phi} = i + 1$ and $q (Z_{i+1}) q'$. Suppose further that $\kmod \views \sem, q \models \adia \psi$. Then there is a successor of $q$, $v \in succ(q)$ such that $\kmod \views \sem, v \models \psi$. All successors of $q$ will be $i$-bisimilar to a successor of $q'$ (point 3. of Definition \ref{def:bisim}). So we have $(\kmod \views \sem, v)~\bisim_i^{\Phi}~(\kmod {\views'} \sem, v')$. As $\depth{\psi} < \depth{\adia \psi}$ we can apply our induction hypothesis to obtain $\kmod {\views'} \sem, v' \models \psi$, and $\kmod {\views} \sem, q' \models \adia \psi$ as desired.
\end{description}

\end{proof}

\section{Conclusion and future work}\label{sec:concfut}

We have argued for a logical analysis of deliberative processes by way of modal logic, where we avoid making restrictions that may not be generally applicable, and instead focus on logical analysis of the space of possible outcomes. The deliberative dynamic logic (\acro{ddl}) was put forth as a concrete proposal, and we showed some results on model checking.

We notice that \acro{ddl} only allows us to study deliberative processes where every step in the process is explicitly mentioned in the formula. That is, while we quantify over the arguments involved and the way in which updates take place, we do not quantify over the \emph{depth} of the update. For instance, a formula like $\Diamond \Box p$ reads that there is a deliberative update such that no matter what update we perform next, we get $\phi$. A natural next step is to consider instead a formula $\Diamond \Box^\ast \phi$, with the intended reading that there is an update which not only makes $\phi$ true, but ensures that it remains true for all possible future \emph{sequences} of updates. Introducing such formulas to the logic, allowing the deliberative modalities to be iterated, is an important challenge for future work.  Moreover, we would also like to consider even more complex temporal operators, such as those of computational tree logic, or even $\mu$-calculus. 

Finding finite representations for the deliberative truths that can be expressed in such languages appears to be much more challenging, but we would like to explore the possibility of doing so.

Also, we would like to explore the question of validity for the resulting logics, and the possibility of obtaining some compactness results. Indeed, it seems that if we introduce temporal operators we will be able to express truths on arbitrary points $q \in \carriers \views$ by corresponding formulas that are true at $(\emptyset,\emptyset)$, thus capturing the way in which complete assent can be faithfully captured by a finite (albeit unbounded) notion of iterated deliberation. 

If the history of the human race is anything to go by, it seems clear that we never run out of arguments or controversy. But it might also be that some patterns or structures are decisive enough that they warrant us to conclude that the \emph{truth} has been settled, even if deliberation may go on indefinitely. A further logical inquiry into this and related questions will be investigated in future work.

\bibliographystyle{abbrv}
\bibliography{cites,more_cites}
\end{document}